\newcommand \ignore[1]{}
\documentclass[runningheads]{llncs}
\spnewtheorem{assumption}[theorem]{Assumption}{\bfseries}{\itshape}
\usepackage[T1]{fontenc}
\usepackage{graphicx}
\usepackage{enumitem}
\usepackage{amsmath}
\usepackage{amssymb}
\usepackage[section]{placeins}
\usepackage{booktabs}
\usepackage{tabularx}
\usepackage{float}
\usepackage{tikz}
\usetikzlibrary{arrows.meta}
\begin{document}
\title{Quantum-Resilient Decentralized AI Economies:\\
       Proof-of-Useful-Work and Post-Quantum Security}
\titlerunning{Quantum-Resilient Decentralized AI Economies}
\authorrunning{C. Barbaccia}
\author{Connor Barbaccia\inst{1}, Sudip Vhaduri\inst{1}, Sayanton Dibbo\inst{1}\thanks{corresponding author}}

\institute{
University of Alabama, Tuscaloosa, AL 35487, USA\\
\email{crbarbaccia@crimson.ua.edu, \{svhaduri,sdibbo\}@ua.edu}
}

%







\maketitle

\begin{abstract}
Proof-of-Work blockchains secure consensus through hash puzzles, producing no external value. In this research, we propose a decentralized AI economy where nodes are rewarded for useful machine-learning work, i.e., inference and training, instead of ineffective hashing method. Our proposed three-layer architecture separates compute, validation, and economic coordination. We formalize it via a $(\theta_c,\theta_w,W)$-closed-loop token economy and derive a sufficient-stake condition for honest participation. While existing Grover's algorithm provides only a quadratic speedup against hash puzzles, it does not accelerate ML-native linear algebra. On the other hand, Shor's algorithm threatens classical blockchain signatures. Post-quantum migration to lattice-based and hash-based standards can address the signature layer. Therefore, useful-work consensus thus offers both economic and quantum-security advantages over classical proof-of-work.

\keywords{Decentralized AI \and Proof of useful work \and Tokenomics \and
Verifiable inference \and zkML \and Post-quantum blockchain \and
Grover's algorithm \and Shor's algorithm}
\end{abstract}
\section{Introduction}
\label{sec:intro}

The two dominant computational trends of the past decade, large machine-learning models and blockchain-based coordination, have largely developed in isolation.
AI workloads are trained and served by a handful of organizations whose compute and energy demands have grown quickly~\cite{patterson2021carbon}, while blockchain networks demonstrate that globally distributed participants
can coordinate computation without trusted intermediaries.
Yet the computation that secures these networks, i.e., SHA-256 hashing in Bitcoin---has no utility beyond the chain itself~\cite{nakamoto2008bitcoin}, drawing an estimated $13.39$\,GW of electricity and producing roughly $65.4$\,MtCO$_2$ annually~\cite{devries2022revisiting}.
Decentralized physical infrastructure (DePIN) networks such as Akash, Render, io.net, and Bittensor already route GPU compute through tokenized incentives~\cite{akash2018whitepaper,render2023whitepaper,ionet2026docs,%
rao2020bittensor}, but the token value remains weakly coupled to the metered
service demand in most of these systems~\cite{mafrur2025ai}.

This paper asks two complementary questions.
\begin{quote}
\emph{Can decentralized networks coordinate useful AI computation in
place of wasteful hashing, while preserving trust and incentive
properties---and how does this architectural choice affect quantum
resilience?}
\end{quote}
The second question is timely: rapid progress in quantum hardware has made the post-quantum risks of blockchain systems a pressing concern~\cite{fernandez2020towards,ghosh2025quantum}, and a useful-work system must understand whether it inherits or avoids those risks.

The contributions of this paper are:
\begin{itemize}[leftmargin=*]
  \item A \emph{three-layer reference architecture} (compute, validation,
        economic coordination) for decentralized AI networks.
  \item A formal \emph{$(\theta_c,\theta_w,W)$-closed-loop token economy}
        (Definition~\ref{def:loop}), making closed-loop operation a precise
        evaluation criterion rather than a design goal.
  \item A \emph{sufficient-stake bound} for honest participation
        (Proposition~\ref{prop:ic}) derived from a concrete reward function
        under an explicit threat model.
  \item A \emph{layer-by-layer quantum-resilience analysis} showing that
        ML-native work layers reduce exposure to Grover-style speedups,
        while Shor-algorithm risks require post-quantum signature migration
        at the transaction layer.
\end{itemize}

\section{Background: Proof of Work and Useful Work}
\label{sec:background}

In Table~\ref{tab:notation}, we summarize the main notation used in the formal model and incentive analysis.

\begin{table}
\caption{Notation used throughout the paper.}
\label{tab:notation}
\centering
\small
\renewcommand{\arraystretch}{0.88}
\begin{tabular}{p{0.18\textwidth}p{0.74\textwidth}}
\toprule
Symbol & Meaning \\
\midrule
$\tau$ & Native token of the decentralized AI network. \\
$\mathcal{N}$ & Decentralized AI network. \\
$n$ & Nonce tested in a proof-of-work puzzle. \\
$H(\cdot)$ & Hash function. \\
$T$ & Difficulty target in proof of work. \\
$\|$ & Concatenation operator. \\
$t$ & Time period being measured. \\
$W$ & Rolling measurement window for closed-loop demand. \\
$\theta_c$ & Minimum required share of AI-service payments made in $\tau$. \\
$\theta_w$ & Minimum required share of new token issuance allocated to validated useful work. \\
$C_\tau(t)$ & Fraction of aggregate AI-service payments made in $\tau$ over window $W$. \\
$I_w(t)$ & Fraction of newly issued $\tau$ supply allocated according to validated productive work. \\
$f$ & Fraction of validator stake controlled by a Byzantine adversary. \\
$x$ & Task input or model input. \\
$y$ & Submitted output for task $x$. \\
$f_\theta$ & Model function with parameters $\theta$. \\
$\theta$ & Model parameter vector. \\
$\theta_t$ & Model parameters at training step $t$. \\
$\theta_{t+1}$ & Model parameters after one gradient update. \\
$\eta$ & Learning rate. \\
$\hat{g}_t$ & Estimated gradient at step $t$. \\
$k$ & Number of replicated executions or validator ratings. \\
$h$ & Probability that an individual replica is honest. \\
$Q(y,x)$ & Validator-aggregated quality score for output $y$ on task $x$. \\
$C(y)$ & Verifiable compute-cost proxy for producing output $y$. \\
$P(y,x)$ & Slashing penalty term for output $y$ on task $x$. \\
$S$ & Staked bond posted by a compute node. \\
$\alpha$ & Protocol reward rate per unit of validated quality. \\
$\beta$ & Cost coefficient converting $C(y)$ into expected node cost. \\
$\gamma$ & Fraction of the bonded stake forfeited after a successful challenge. \\
$q_H$ & Expected validator quality score under honest execution. \\
$q_L$ & Expected validator quality score under low-effort or dishonest execution. \\
$c_H$ & Compute cost of honest execution. \\
$c_L$ & Compute cost of the cheaper deviation. \\
$p$ & Probability that a deviation is detected and successfully challenged. \\
$N_{\mathrm{nodes}}$ & Number of compute nodes in the worked example. \\
$c_{\mathrm{gpu}}$ & GPU cost per GPU-hour. \\
$\lambda_{\mathrm{avg}}$ & Average query rate per node. \\
$\Lambda$ & Total network query volume per day. \\
$p_q$ & User price per query. \\
$\epsilon$ & Numerical tolerance used for tolerance-based output comparison. \\
\bottomrule
\end{tabular}
\renewcommand{\arraystretch}{1.0}
\end{table}

\subsection{Proof of Work and Its Discontents}

In Bitcoin~\cite{nakamoto2008bitcoin}, miners search for a nonce $n$ such
that
\begin{equation}
  H(\text{block} \,\|\, n) < T,
  \label{eq:pow-condition}
\end{equation}
where $H(\cdot)$ is the hash function, $\|$ denotes concatenation, and $T$
is the difficulty target.
The puzzle is cheap to verify but costly to solve, and security scales
linearly with mining power.
This mechanism secures the ledger but produces no reusable external output.

\emph{Proof of Useful Work} (PoUW) replaces hash puzzles with tasks that
provide external benefit while retaining verifiability and Sybil
resistance~\cite{ball2018proofs}.
Proposed tasks have included matrix multiplication, protein folding, and,
more recently, ML workloads.
Fig.~\ref{fig:evolution} traces this progression.

\begin{figure}[H]
\centering
\includegraphics[width=\columnwidth]{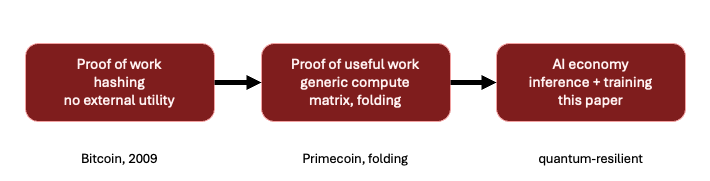}
\caption{Evolution of \emph{work} in decentralized networks: from
hash puzzles (no external value) to generic useful tasks to AI-native
computation as the unit of work.}
\label{fig:evolution}
\end{figure}
\FloatBarrier

\subsection{Prior Work}

Related work spans blockchain--ML integration
surveys~\cite{blockchainml2024,dinh2018aiblockchain,geren2025blockchain};
zkML systems~\cite{kang2022scaling,chen2024zkml,abbaszadeh2024zero};
optimistic verification~\cite{conway2024opml};
reputation-based inference networks~\cite{bouchiha2024llmchain};
communication-efficient distributed
training~\cite{douillard2023diloco,ryabinin2023swarm,lin2018deep};
Byzantine-tolerant
aggregation~\cite{blanchard2017machine,yin2018byzantine}; and post-quantum
blockchain security~\cite{fernandez2020towards,ghosh2025quantum}.

Active decentralized AI systems include
\textbf{Bittensor}~\cite{rao2020bittensor} (peer-ranked intelligence
markets), \textbf{Gensyn}~\cite{gensyn2022} (decentralized training with
proof-of-learning validation), and
\textbf{Akash}~\cite{akash2018whitepaper} (GPU marketplace with
client-defined correctness).
None of these systems fully closes the economic loop defined in
Section~\ref{sec:system}, and none provides a layer-by-layer quantum
threat analysis.

\section{System Model: A Closed-Loop AI Economy}
\label{sec:system}

We propose a three-layer reference architecture, illustrated in
Fig.~\ref{fig:architecture}.
The \emph{compute layer} performs AI work; the \emph{validation layer}
checks whether that work was done correctly; and the \emph{economic layer}
links validated work, token payments, and node compensation.

\begin{figure}[H]
\centering
\includegraphics[width=\columnwidth]{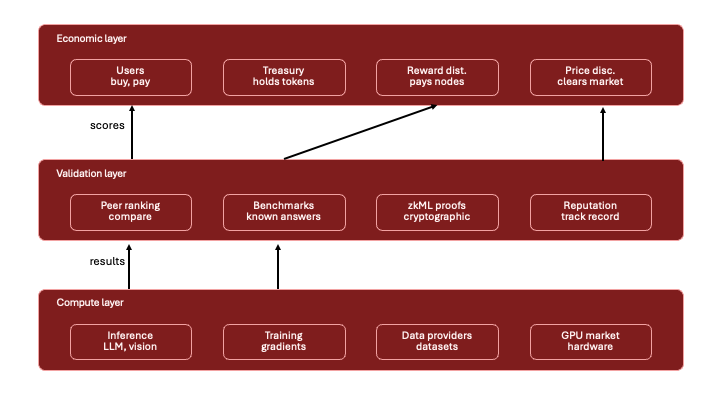}
\caption{Three-layer reference architecture for a decentralized AI economy. Compute nodes perform inference and training, validators score outputs,
and economic layer converts validated work into token rewards. Work and scores flow upward, whereas user payments flow downward to compensate nodes.}
\label{fig:architecture}
\end{figure}
\FloatBarrier

\begin{assumption}[Threat model and network]
\label{assm:threat}
We work in a partially synchronous network as defined by Dwork, Lynch, and
Stockmeyer~\cite{dwork1988consensus}: messages may be delayed arbitrarily
before an unknown Global Stabilization Time, but arrive within a known
bound afterward.
The Byzantine adversary controls fewer than one-third of validator stake
($f < 1/3$), following standard BFT
consensus~\cite{castro1999practical,buchman2016tendermint}.
Compute nodes are rational-majority actors following the protocol only when
it is their dominant strategy.
Cryptographic adversaries are polynomial-time; a public model registry
allows the network to verify model identity independently of output
correctness.
\end{assumption}

All later claims about incentive compatibility, verification soundness, and
liveness hold under Assumption~\ref{assm:threat} unless stated otherwise.

\subsection{Compute Layer}

Nodes contribute GPUs or specialized accelerators for two task types. \emph{Inference} applies a hosted model to a user request: stateless, low-latency, and high-throughput. \emph{Training} updates model parameters through gradient computation, which is stateful, bandwidth-intensive, and harder to coordinate because workers must stay aligned on shared model state. Each node advertises available models, hardware, and prices; a scheduler matches incoming tasks to suitable nodes.

\subsection{Validation Layer}


For all ML workloads, no single verification technique is adequate. The layer combines: \emph{cryptographic verification} via zkML proofs~\cite{chen2024zkml,kang2022scaling}, offering strong guarantees at high prover cost; \emph{peer ranking} (comparing outputs from multiple independently assigned nodes); \emph{benchmark injection} (hidden test queries with known answers mixed into normal traffic to catch low-effort nodes); and \emph{reputation weighting} based on previous node reliability.


In actuality, systems ought to use validation techniques according to the importance and danger of the activity under verification. While higher-value or security-sensitive workloads may support more robust cryptographic verification, low-cost inference requests can rely on less expensive processes like peer review, benchmark injection, and reputation weighting. By avoiding the need for costly zkML proofs for each task, this layered strategy enables the network to employ stronger guarantees when the extra prover overhead is worthwhile~\cite{bouchiha2024llmchain,conway2024opml,kang2022scaling,chen2024zkml,abbaszadeh2024zero}.




\subsection{Economic Layer and Closed-Loop Definition}

The economic layer (Fig.~\ref{fig:loop}) links user payments, token
issuance, and node compensation in a single cycle.
The same token mediates consumption and compensation, in contrast to
purely emission-driven models where token value depends on future-use
expectations rather than realized
demand~\cite{buterin2014next,nakamoto2008bitcoin}.

\begin{figure}[H]
\centering
\includegraphics[width=0.9\columnwidth]{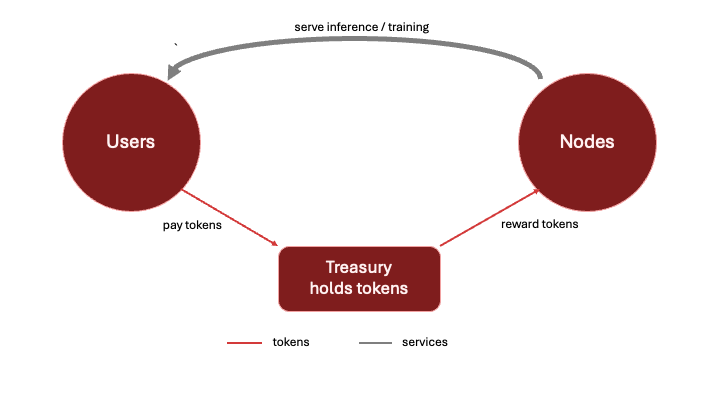}
\caption{Closed-loop token economy: users pay tokens into the treasury,
which rewards productive nodes; nodes provide AI services back to users.
Token value tracks real service demand rather than speculative expectations.}
\label{fig:loop}
\end{figure}
\FloatBarrier

\begin{definition}[Closed-loop AI economy]
\label{def:loop}
Let $\mathcal{N}$ be a decentralized AI network with native token $\tau$.
Let $C_\tau(t)$ denote the fraction of aggregate node compensation paid in
$\tau$ over a rolling window $W$ ending at time $t$, and let $I_w(t)$
denote the fraction of newly issued $\tau$ allocated according to
validated productive work over the same window.
We say $\mathcal{N}$ is a
\emph{$(\theta_c,\theta_w,W)$-closed-loop AI economy} if, for all
sufficiently large $t$:
\begin{enumerate}[label=(\roman*),leftmargin=*]
  \item $\tau$ is the only accepted on-network medium of payment for AI
        services;
  \item $C_\tau(t) \geq \theta_c$; and
  \item $I_w(t) \geq \theta_w$.
\end{enumerate}
Unless otherwise stated, we use $\theta_c = 0.75$, $\theta_w = 0.75$,
and $W = 90$\,days as default thresholds.
A system satisfying all three conditions is \emph{strictly closed-loop};
satisfying~(i) but missing at least one threshold is \emph{weakly
closed-loop}; failing~(i) is \emph{not closed-loop}.
\end{definition}

\section{Inference and Training: A Comparison}
\label{sec:inftrain}

Inference and training stress a decentralized substrate in fundamentally
different ways.
An inference request is \emph{stateless}: a fixed model $f_\theta$ maps
input $x$ to output $y = f_\theta(x)$, with narrow latency budgets and
straightforward replication.
With $k$ independent replicas each honest with probability $h > 1/2$,
majority-vote error decreases exponentially in $k$ by Hoeffding's
inequality~\cite{hoeffding1963probability}.
Batching and caching improve throughput but introduce a free-rider attack
surface: a node might serve cached results while claiming full computation
rewards, so the validation layer must distinguish genuine work from replay
via nonce tokens or fresh test inputs.

Training is \emph{stateful}: a gradient step
$\theta_{t+1} = \theta_t - \eta\,\widehat{g}_t$
modifies shared global parameters.
Decentralized training faces synchronization overhead (slow nodes
bottleneck SGD), high bandwidth (gradient payloads of several GB per
step), and \sloppy limited gradient verifiability.
Communication-efficient methods such as DiLoCo~\cite{douillard2023diloco}
and SWARM~\cite{ryabinin2023swarm} reduce bandwidth costs;
Byzantine-tolerant aggregators such as Krum~\cite{blanchard2017machine} and
coordinate-wise trimmed mean~\cite{yin2018byzantine} bound adversarial
influence.
We summarize the contrast in Table~\ref{tab:inftrain}.

\begin{table}[H]
\caption{Contrasting demands of decentralized inference and training.}
\label{tab:inftrain}
\centering
\footnotesize
\renewcommand{\arraystretch}{1.3}
\setlength{\tabcolsep}{4pt}
\begin{tabularx}{\columnwidth}{@{}>{\raggedright\arraybackslash}p{0.15\columnwidth}
  >{\raggedright\arraybackslash}p{0.30\columnwidth}
  >{\raggedright\arraybackslash}p{0.43\columnwidth}@{\hspace{0.06\columnwidth}}}
\toprule
\textbf{Property} & \textbf{Inference} & \textbf{Training} \\
\midrule
State        & Stateless                & Stateful parameters \\
Duration     & Milliseconds--seconds    & Hours to weeks \\
Coordination & Independent workers      & Synchronous/semi-synchronous \\
Bandwidth    & Low (request/response)   & High (gradient traffic) \\
Verification & Peer comparison feasible & Specialized checks needed \\
Reward       & Per query                & Per epoch/checkpoint \\
Examples     & opML~\cite{conway2024opml}, zkML~\cite{kang2022scaling},
               LLMChain~\cite{bouchiha2024llmchain} &
               Gensyn~\cite{gensyn2022}, DiLoCo~\cite{douillard2023diloco},
               SWARM~\cite{ryabinin2023swarm} \\
\bottomrule
\end{tabularx}
\end{table}
\FloatBarrier

A credible decentralized AI economy must accommodate both the training
that produces the models inference serves and the inference that generates
the revenue to fund further training.

\section{Incentive Design}
\label{sec:incentives}

\subsection{Threat Landscape}

We note the main incentive risks as below: 
\begin{enumerate}
    \item \emph{Free riding}: returning cached or trivial outputs while claiming full compensation,
    \item \emph{Sybil farming}: creating many shallow identities to capture disproportionate rewards,
    \item \emph{Collusion}: validators and provers coordinating to approve incorrect work, and
    \item \emph{Model plagiarism}: serving a copied model while claiming ownership.
\end{enumerate}

\subsection{Reward Function and Honest-Effort Condition}

For a node answering task $x$ with output $y$, define:
\begin{itemize}[leftmargin=*]
  \item $Q(y,x) \in [0,1]$: validator-aggregated quality score (average of
        $k$ independent ratings; under $f < 1/3$ Byzantine stake, at least
        $\lceil 2k/3 \rceil$ ratings are honest).
  \item $C(y) \geq 0$: verifiable compute-cost proxy (FLOPs attested by an
        attested runtime and checked against a reference forward pass of
        the declared model).
        Honest execution incurs the declared model's reference cost; a
        deviation that substitutes cheaper computation reports a lower
        $C(y)$, producing the gap $c_H - c_L$ in
        Proposition~\ref{prop:ic}.
  \item $P(y,x) \in \{0,S\}$: slashing term equal to staked bond $S$
        if a dispute against output $y$ succeeds within the challenge
        window, and $0$ otherwise.
\end{itemize}

The expected net reward to a node is
\begin{equation}
  \mathbb{E}[R(y,x)]
  \;=\; \alpha\, Q(y,x) \;-\; \beta\, C(y)
  \;-\; \gamma\, \mathbb{E}[P(y,x)],
  \label{eq:reward}
\end{equation}
where $\alpha > 0$ is the protocol reward rate per unit quality,
$\beta \geq 0$ is the node's per-FLOP cost (set by hardware and energy
prices, not a protocol lever), and $\gamma \in (0,1]$ is the slash
fraction.

\begin{proposition}[Sufficient stake for honest effort]
\label{prop:ic}
Under Assumption~\ref{assm:threat} and reward function~\eqref{eq:reward},
let honest execution yield expected quality $q_H$ at compute cost $c_H$,
and let the most profitable deviation yield $q_L \leq q_H$ at
$c_L \leq c_H$.
Suppose honest outputs are never successfully challenged, while deviations
are caught with probability at least $p \in (0,1]$.
Honest execution is a best response whenever
\begin{equation}
  S \;\geq\; \frac{\beta(c_H - c_L) - \alpha(q_H - q_L)}{\gamma\,p}.
  \label{eq:stakebound}
\end{equation}
If $\alpha(q_H - q_L) \geq \beta(c_H - c_L)$, the right-hand side is
non-positive and any $S \geq 0$ suffices.
\end{proposition}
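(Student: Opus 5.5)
The plan is to compare the expected net reward~\eqref{eq:reward} under honest execution with that under the most profitable deviation. Then I would solve the resulting inequality for $S$.

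\emph{Step 1: honest payoff.} Honest outputs are never successfully challenged, so $\mathbb{E}[P]=0$. Since $Q$ and $C$ enter~\eqref{eq:reward} linearly, only their expectations matter, and the honest payoff is
\[
\mathbb{E}[R_H] = \alpha q_H - \beta c_H .
\]
Here Assumption~\ref{assm:threat} is what justifies $q_H$ being the expected aggregated score: at least $\lceil 2k/3\rceil$ ratings are honest, and the attested runtime pins $C(y)$ to the declared model's reference cost. I take both as given through the hypotheses on $q_H$ and $c_H$.

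\emph{Step 2: deviation payoff.} For the deviation, $P\in\{0,S\}$ equals $S$ with probability at least $p$. Hence $\mathbb{E}[P]\ge pS$, which gives
\[
\mathbb{E}[R_L] \le \alpha q_L - \beta c_L - \gamma p S .
\]
The inequality direction is the point to watch. Detection probability is only bounded below, so the deviator's payoff is bounded above, and that is exactly the direction needed.

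\emph{Step 3: solve for $S$.} Honest execution is a best response if $\mathbb{E}[R_H]\ge \mathbb{E}[R_L]$. A sufficient condition is
\[
\alpha q_H-\beta c_H \ge \alpha q_L-\beta c_L-\gamma p S .
\]
This is equivalent to $\gamma p S \ge \beta(c_H-c_L)-\alpha(q_H-q_L)$. Because $\gamma p>0$, we may divide by it, which yields~\eqref{eq:stakebound}.

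\emph{Step 4: all deviations.} Since the deviation considered is the \emph{most profitable} one, beating it beats every deviation. To be safe, I would argue that the bound applies to each deviation with its own $(q_L,c_L)$ and detection probability at least $p$. I would then take the supremum of the numerator, which is what ``most profitable'' encodes.

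\emph{Step 5: degenerate case.} If $\alpha(q_H-q_L)\ge\beta(c_H-c_L)$, the numerator is non-positive. Dividing by $\gamma p>0$ makes the right-hand side non-positive, so every $S\ge0$ satisfies the bound.

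The main obstacle is Step 4, together with the interpretation of ``best response''. The ratio bound depends on $(q_L,c_L)$, so it must be clear that a single $S$ defeats all deviations, and not only the one maximizing $\alpha q_L-\beta c_L$. My first move is to phrase the condition per deviation and state that~\eqref{eq:stakebound} with the worst-case numerator dominates. Everything else is algebra and linearity of expectation.
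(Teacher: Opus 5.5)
Your proposal is correct and follows the paper's proof: compare $\mathbb{E}[R_H]=\alpha q_H-\beta c_H$ with the deviation payoff $\alpha q_L-\beta c_L-\gamma pS$, then divide by $\gamma p>0$. Two refinements go slightly beyond the paper, which simply writes $\mathbb{E}[P]=pS$: you use $\mathbb{E}[P]\ge pS$, which matches the hypothesis that deviations are caught with probability \emph{at least} $p$, and your Step~4 notes that the uniform denominator $\gamma p$ makes the deviation maximizing $\alpha q-\beta c$ the binding one.
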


\begin{proof}
Honest execution yields $\mathbb{E}[R_H] = \alpha q_H - \beta c_H$.
The most profitable deviation yields
$\mathbb{E}[R_L] = \alpha q_L - \beta c_L - \gamma p S$,
where $\mathbb{E}[P] = pS$.
Requiring $\mathbb{E}[R_H] \geq \mathbb{E}[R_L]$ gives
$\gamma p S \geq \beta(c_H - c_L) - \alpha(q_H - q_L)$;
dividing by $\gamma p > 0$ yields~\eqref{eq:stakebound}. \qed
\end{proof}

Equation~\eqref{eq:stakebound} captures the core trade-off.
Stronger detection (larger $p$) or a wider quality gap between honest and
low-effort work (larger $q_H - q_L$) each reduce the required bond.
When the quality loss from cheating already exceeds the compute savings,
the right-hand side is non-positive and even $S = 0$ satisfies the
inequality.
Conversely, when verification is weak, the required stake rises, which
protects the system but also raises the barrier for honest participants.

\section{Quantum Resilience and Post-Quantum Security}
\label{sec:quantum}

Quantum computing threatens blockchain architectures in two distinct ways,
with very different implications for a useful-work system.

\subsection{Shor's Algorithm and the Signature Layer}

\emph{Shor's algorithm}~\cite{shor1994algorithms} factors integers and
computes discrete logarithms in polynomial time, breaking the hardness
assumptions behind ECDSA, ECDH, and related classical signature
schemes~\cite{bernstein2025post,fernandez2020towards}.
With a fault-tolerant quantum computer of sufficient scale, an attacker
could derive private keys from exposed public keys and forge transactions on
any chain relying on elliptic-curve cryptography.
This vulnerability is not specific to this proposal; it affects all
classical blockchain designs.

\sloppy NIST finalized three post-quantum cryptographic standards in August 2024~\cite{nist2024pqc}: ML-KEM (CRYSTALS-Kyber, key encapsulation), ML-DSA (CRYSTALS-Dilithium, lattice-based signatures), and SLH-DSA (SPHINCS+, hash-based signatures). In the post-quantum era, researchers started focusing on quantum-based machine learning modeling~\cite{vhaduri2026we}, where a strong algorithm can play a vital role. ML-DSA and SLH-DSA are the recommended starting points for the transaction layer of any system targeting quantum resilience.

\subsection{Grover's Algorithm and Hash-Based Work}

\emph{Grover's algorithm}~\cite{grover1996fast} provides a quadratic
speedup for unstructured search.
Against an $n$-bit hash target, it reduces preimage-search cost from
$2^n$ to approximately $2^{n/2}$, halving the effective security level.
For Bitcoin's 256-bit target, this yields roughly 128-bit post-quantum
security---serious but recoverable.
Increasing hash output length restores the classical security margin at
modest verification cost.
Fig.~\ref{fig:qubit} illustrates the key distinction between classical
bits and qubits that underpins both speedups.

\begin{figure}[H]
\centering
\includegraphics[width=0.6\textwidth]{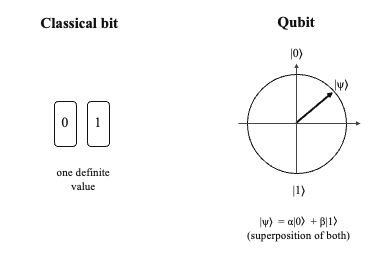}
\caption{Classical bit versus qubit.
A classical bit is fixed at $0$ or $1$; a qubit occupies the superposition
$|\psi\rangle = \alpha|0\rangle + \beta|1\rangle$, enabling the quantum
parallelism that underpins Grover's and Shor's speedups.}
\label{fig:qubit}
\end{figure}
\FloatBarrier

\subsection{Why ML Work Layers Are Less Exposed}

The key observation is that model inference and gradient computation are
\emph{not} unstructured search problems.
Grover's speedup applies to functions of the form
$f : \{0,1\}^n \to \{0,1\}$ where the goal is to find an input $x$ with
$f(x) = 1$.
The useful-work tasks in our architecture: neural-network forward passes
and gradient aggregations consist of dense matrix multiplications,
non-linear activations, and normalization operations.
These do not admit the ``query one candidate'' structure that Grover
exploits; there is no oracle-queryable predicate to invert.

This means that replacing hash puzzles with ML computation structurally
avoids making artificial preimage search the central economic activity of
the network, and avoids giving a quantum adversary the specific
computational structure Grover requires.
The argument is architectural rather than a formal cryptographic reduction;
a post-quantum security game for ML-work consensus remains an open problem.

\subsection{Layer-by-Layer Assessment}

Table~\ref{tab:quantum} summarizes the quantum threat profile and
recommended mitigations at each layer of the architecture.

\begin{table}[H]
\caption{Quantum threat profile and post-quantum mitigations by
architectural layer.}
\label{tab:quantum}
\centering
\footnotesize
\renewcommand{\arraystretch}{1.3}
\setlength{\tabcolsep}{3pt}
\begin{tabularx}{\columnwidth}{@{}
  >{\raggedright\arraybackslash}p{0.20\columnwidth}
  >{\raggedright\arraybackslash}p{0.19\columnwidth}
  >{\raggedright\arraybackslash}p{0.24\columnwidth}
  >{\raggedright\arraybackslash}p{0.31\columnwidth}@{}}
\toprule
\textbf{Layer} & \textbf{Classical risk} & \textbf{Quantum threat}
  & \textbf{Recommended mitigation} \\
\midrule
Transaction (signatures) &
  Key forgery &
  Shor: poly-time ECDSA break &
  ML-DSA, SLH-DSA~\cite{nist2024pqc} \\
Hash PoW &
  Brute-force ($2^n$) &
  Grover: $2^{n/2}$ effective cost &
  Larger hash output; or ML-work replacement \\
ML work layer &
  Free-riding, model substitution &
  Not accelerated by Grover &
  Attestation, zkML~\cite{kang2022scaling} \\
zkML proofs &
  Pairing-based hardness &
  Shor threatens EC pairings &
  STARK-style proofs (hash-only)~\cite{ben2018scalable} \\
\bottomrule
\end{tabularx}
\end{table}
\FloatBarrier

At the \textbf{transaction layer}, ML-DSA and SLH-DSA migration can
proceed incrementally as wallets and clients update.
At the \textbf{validation layer}, hash-based commitments can be
strengthened by increasing output length, and STARK-based proof systems
are strong post-quantum candidates because they rely on
collision-resistant hashing rather than elliptic-curve
pairings~\cite{ben2018scalable}.
For zkML specifically, pairing-based SNARKs (e.g., Groth16) face
Shor-algorithm risk; STARK-style proofs avoid this
exposure~\cite{kang2022scaling,chen2024zkml}.
At the \textbf{ML work layer} itself, the dense linear algebra of
inference and backpropagation does not benefit from Grover-style search,
yielding a structural quantum-resilience advantage over hash-based PoW
that does not depend on parameter choices.

\section{Related Systems}
\label{sec:related}

None of the deployed systems fully combines inference, training, and
closed-loop token demand.
Bittensor~\cite{rao2020bittensor} has advanced peer-ranked intelligence
markets but its token flows are not directly defined by the metered
service-payment criterion in Definition~\ref{def:loop}; it is therefore
at most weakly closed-loop.
Gensyn~\cite{gensyn2022} focuses on decentralized training with
proof-of-learning validation but does not describe a general inference
market.
Akash~\cite{akash2018whitepaper} offers the broadest compute marketplace
in the comparison but leaves AI-specific correctness to clients.
opML~\cite{conway2024opml} and zkML~\cite{kang2022scaling} address
inference verification but not the token-economy or training layers.
LLMChain~\cite{bouchiha2024llmchain} adds a reputation system but does
not target closed-loop economics.
Table~\ref{tab:related} summarizes these comparisons.

\begin{table}[H]
\caption{Comparison of representative decentralized AI and useful-work
systems. \emph{Loop} denotes whether the system satisfies, partially
satisfies, or does not target the closed-loop criterion in
Definition~\ref{def:loop}.}
\label{tab:related}
\centering
\footnotesize
\renewcommand{\arraystretch}{1.3}
\setlength{\tabcolsep}{4pt}
\begin{tabularx}{\columnwidth}{@{}
  >{\raggedright\arraybackslash}p{0.18\columnwidth}
  >{\centering\arraybackslash}p{0.11\columnwidth}
  >{\centering\arraybackslash}p{0.12\columnwidth}
  >{\raggedright\arraybackslash}p{0.23\columnwidth}
  >{\centering\arraybackslash}p{0.13\columnwidth}
  >{\centering\arraybackslash}p{0.09\columnwidth}@{}}
\toprule
\textbf{System} & \textbf{Inference} & \textbf{Training}
  & \textbf{Validation} & \textbf{Loop} & \textbf{Ref.} \\
\midrule
Bitcoin             & --          & --
  & Hash puzzle       & n/a      & \cite{nakamoto2008bitcoin} \\
Bittensor           & \checkmark  & partial
  & Peer ranking      & partial  & \cite{rao2020bittensor} \\
Gensyn              & --          & \checkmark
  & PoL / optimistic  & partial  & \cite{gensyn2022} \\
Akash               & generic     & generic
  & Client-defined    & --       & \cite{akash2018whitepaper} \\
opML                & \checkmark  & --
  & Optimistic        & --       & \cite{conway2024opml} \\
LLMChain            & \checkmark  & --
  & Reputation        & --       & \cite{bouchiha2024llmchain} \\
zkML                & \checkmark  & --
  & Cryptographic     & n/a      & \cite{kang2022scaling} \\
\textbf{This paper} & \checkmark  & \checkmark
  & Hybrid            & \checkmark & -- \\
\bottomrule
\end{tabularx}
\end{table}
\FloatBarrier



There are various ways that the verification tools vary at a significant scale. For example, while opML is cheap, it adds dispute latency and while zkML offers higher correctness, prover overhead is very high for for large transformer
models~\cite{kang2022scaling,chen2024zkml}. Similarly, reputation systems are statistical, although these are very practical systems. For any production system, these tools are jointly used depending on their risk and significance trade-offs. Additionally, weak relationship exist between token value and metered demand, jeopardizing condition (ii) defined in Definition~\ref{def:loop}, when compensation is derived from speculations~\cite{mafrur2025ai}.

\section{Open Challenges and Future Directions}
\label{sec:future}

Four challenges require further work. First, zkML prover costs remain prohibitive for routine inference at commodity prices~\cite{kang2022scaling,chen2024zkml}; until they fall substantially, most verification will rely on redundant execution or optimistic dispute windows~\cite{conway2024opml,kalodner2018arbitrum}. It also needs exploration of how this zkML prover can be applicable to have provable security-bound in distributed computing, specifically, involving hardware and IoT devices~\cite{vhaduri2023bag}. Second, decentralized training needs a cleaner connection between communication-efficient methods, Byzantine-robust aggregation, and token rewards that incentivize convergence rather than only completed compute cycles~\cite{douillard2023diloco,ryabinin2023swarm}. Third, governance and data provenance, training data licensing, model ownership, data privacy in split learning, adversarial and privacy vulnerabilities, and harmful-output liability remain unresolved, and determine whether participants trust the network enough to use it at scale~\cite{dibbo2024improving}. Similar challenges arise in real-world AI deployments, where model performance
can vary substantially across heterogeneous operating environments, requiring adaptive and trustworthy learning mechanisms~\cite{vhaduri2023environment}. Fourth, and most directly relevant to this venue, the quantum-resilience claim in Section~\ref{sec:quantum} is an architectural argument, not a cryptographic proof. A formal post-quantum security game for ML-work consensus, including an exact analysis of validation primitives, a quantum reduction for the honest-majority assumption under Grover-accessible oracles, and an analysis of lattice-signature integration, remains an open research problem.

\section{Conclusion}
\label{sec:conclusion}

This paper proposed a decentralized AI economy where inference and training, rather than hashing, become the productive work of the network. The three-layer architecture, closed-loop token criterion, and incentive-compatibility result together identify three necessary conditions for viability: verification must make low-quality work detectable, incentives must make honest work individually rational, and real service demand must support rewards or the token economy collapses into speculative emissions. The quantum-resilience analysis adds a fourth observation of particular relevance for emerging quantum technologies. Replacing hash puzzles with ML computation shifts the work layer away from the preimage-search structure that Grover's algorithm exploits; this is a structural advantage, not merely a parameter adjustment. Signature-layer exposure to Shor's algorithm is a shared liability of all classical blockchains, but it is addressable through standardized post-quantum signatures (ML-DSA, SLH-DSA) and hash-based proof systems (STARKs) at the validation layer. Designed well, useful work networks can provide an open substrate for AI compute that is simultaneously more economically efficient and more structurally resilient to quantum adversaries than classical proof-of-work, making them a natural target for formal analysis in the quantum computing and quantum machine learning research community.

\bibliographystyle{splncs04}
\bibliography{ref_revised}

\end{document}